%
%
%
%
%
\documentclass[11pt]{amsart}
\usepackage{amsmath,amsfonts,amssymb,amsthm,epsf}
\usepackage{fullpage}

\newcommand{\be}{\begin{equation}}
\newcommand{\ee}{\end{equation}}
\newcommand{\bea}{\begin{eqnarray}}
\newcommand{\eea}{\end{eqnarray}}
\newcommand{\beas}{\begin{eqnarray*}}
\newcommand{\eeas}{\end{eqnarray*}}

\newtheorem{thm}{Theorem}
\newtheorem{corl}[thm]{Corollary}
\newtheorem{lma}[thm]{Lemma}
\newtheorem{prop}[thm]{Proposition}
\newtheorem{defn}[thm]{Definition}

\def\One{\mathbb{I}}

\def\bar{\overline}

\def\C{\mathbb{C}}

\newcommand{\ext}[1]{{(#1)}}

\def\I{{\rm I}}

\def\id{\mathrm{id}}

\def\prim{\textup{prim}}

\def\Sym{\mathrm{Sym}}

\def\tr{\mathrm{tr~}}

\def\fpt{\;\raisebox{-4mm}{\epsfysize=8mm\epsfbox{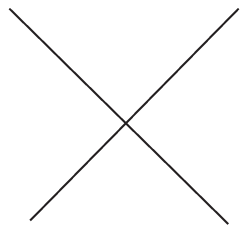}}\;}
\def\tpt{\;\raisebox{-4mm}{\epsfysize=8mm\epsfbox{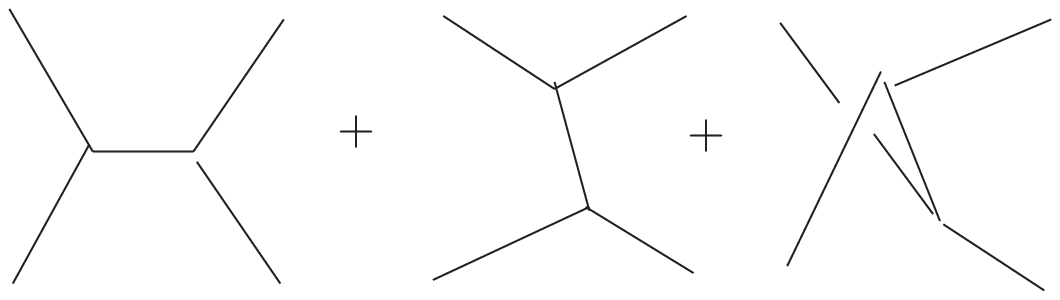}}\;}
\def\adscft{\;\raisebox{-36mm}{\epsfysize=100mm\epsfbox{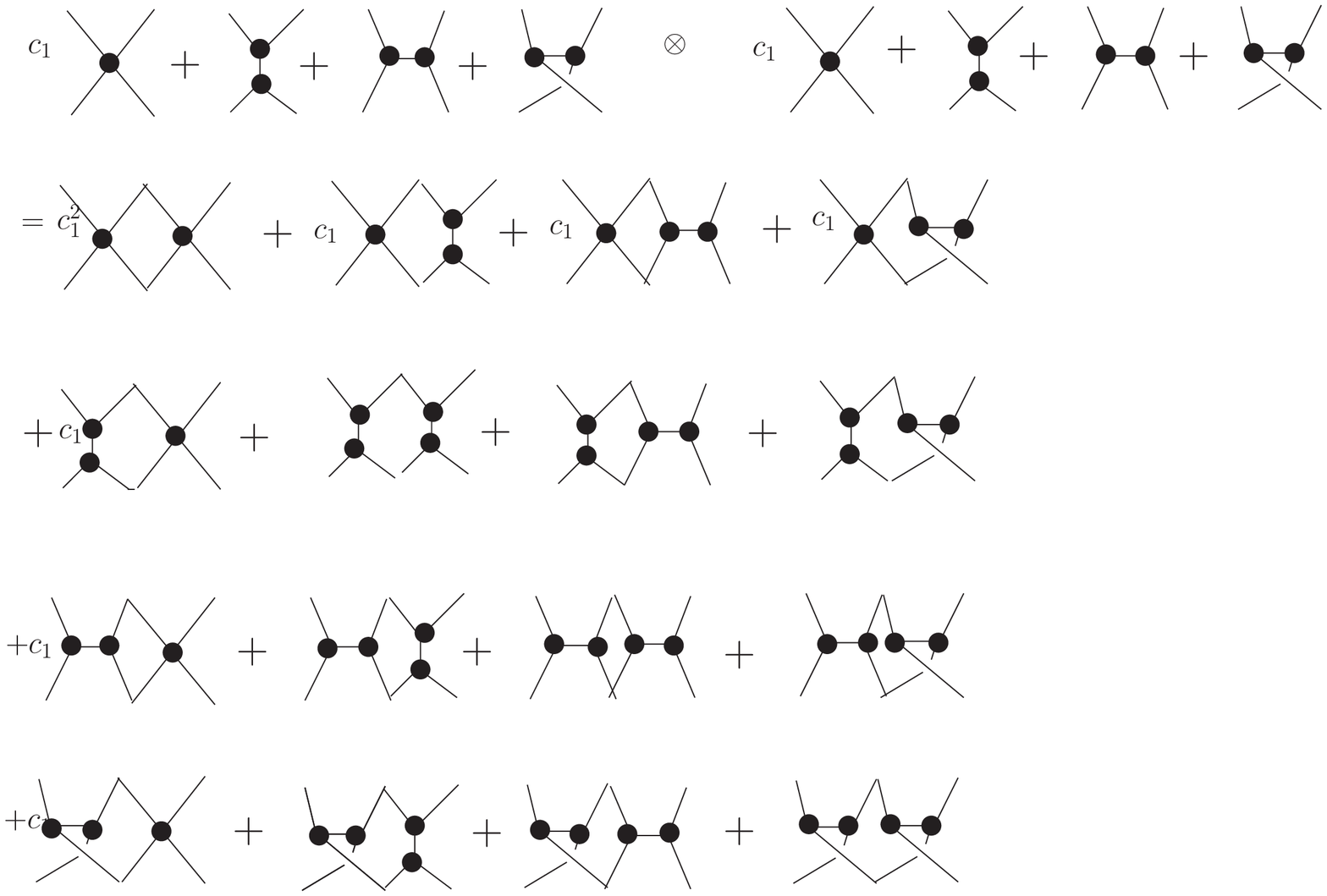}}\;}
\def\bcfw{\;\raisebox{-12mm}{\epsfysize=30mm\epsfbox{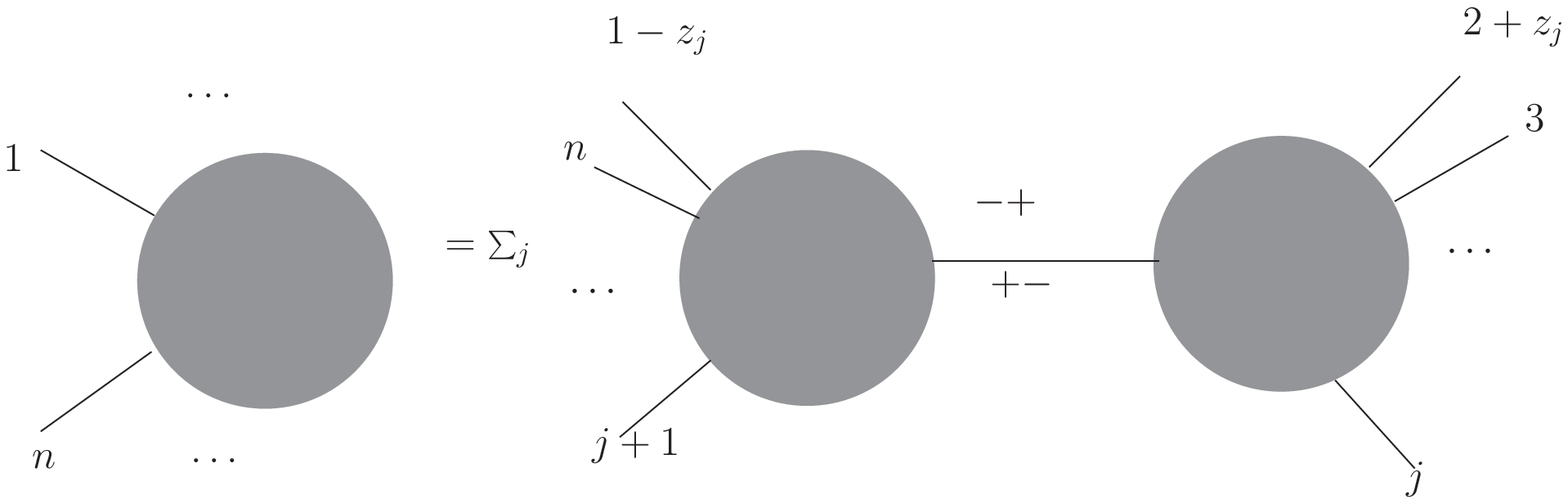}}\;}
\def\qedv{\;\raisebox{-2mm}{\epsfysize=6mm\epsfbox{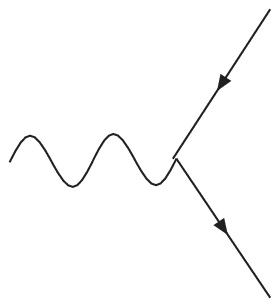}}\;}

\def\flptat{\;\raisebox{-2mm}{\epsfysize=5mm\epsfbox{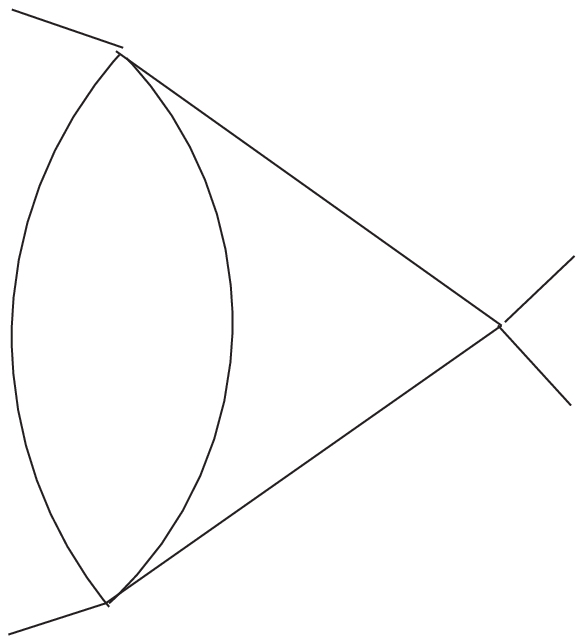}}\;}
\def\flptao{\;\raisebox{-2mm}{\epsfysize=5mm\epsfbox{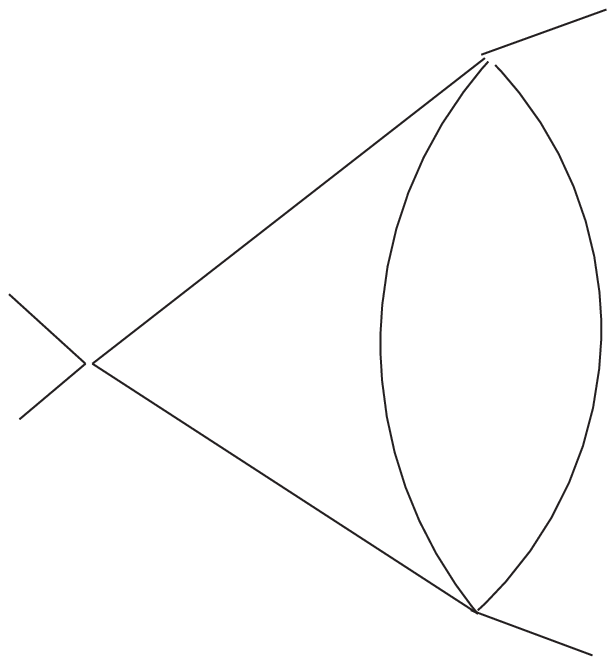}}\;}
\def\flptb{\;\raisebox{-1mm}{\epsfysize=3mm\epsfbox{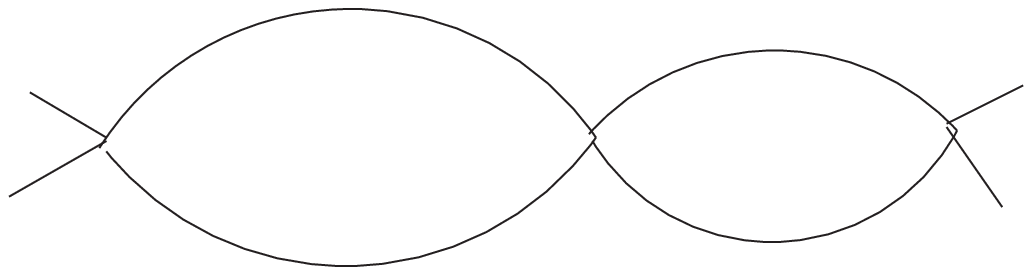}}\;}
\def\sol{\;\raisebox{-0.5mm}{\epsfysize=2mm\epsfbox{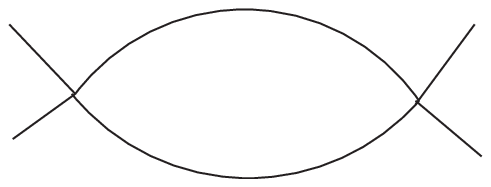}}\;}
\def\tol{\;\raisebox{-1mm}{\epsfysize=4mm\epsfbox{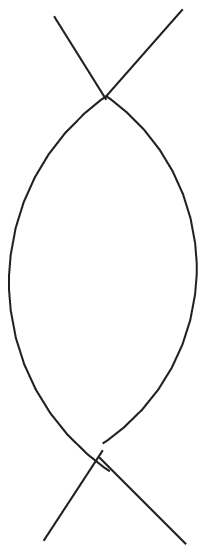}}\;}
\def\uol{\;\raisebox{-1mm}{\epsfysize=4mm\epsfbox{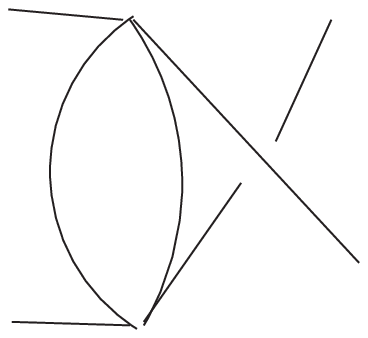}}\;}
\def\tadol{\;\raisebox{-2mm}{\epsfysize=4mm\epsfbox{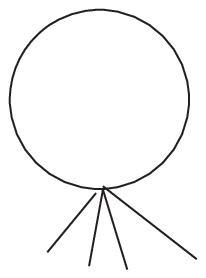}}\;}

\def\sixpt{\;\raisebox{-2mm}{\epsfysize=4mm\epsfbox{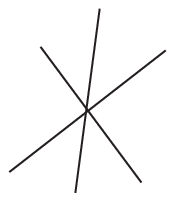}}\;}
\def\sixpto{\;\raisebox{-3mm}{\epsfysize=6mm\epsfbox{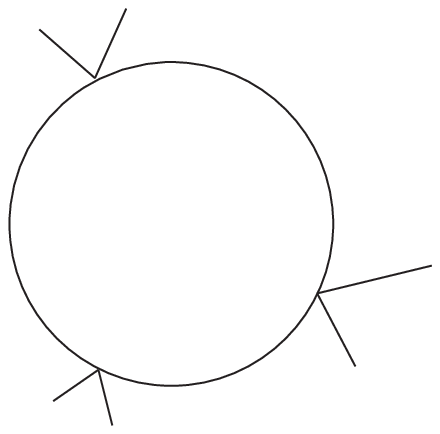}}\;}

\title{Recursive relations in the core Hopf algebra.}\thanks{This paper was partially supported by grant NSF grant DMS-0603781 at Boston University. DK supported by CNRS}
\author{Dirk Kreimer}
\address{IHES, 35 rte. de Chartres, 91440 Bures-sur-Yvette, France (http://\ www.ihes.fr)
and Boston U.\ (http://math.bu.edu), kreimer@ihes.fr}
\author{Walter D.\ van Suijlekom}
\address{Institute for Mathematics, Astrophysics and Particle Physics
Faculty of Science, Radboud Universiteit Nijmegen
Toernooiveld 1, 6525 ED Nijmegen, The Netherlands
waltervs@math.ru.nl}
\date{\today}

\begin{document}
\maketitle

\begin{abstract}
We study co-ideals in the core Hopf algebra underlying a quantum field theory.
\end{abstract}

\section{Introduction and Conventions}
In the following, we consider the core Hopf algebra of Feynman graphs.
It is a Hopf algebra which contains the renormalization Hopf algebra as a quotient Hopf algebra \cite{BK08}.
We are particularly interested in the structure of Green functions with respect to this Hopf algebra.

We write $G^r \equiv G^r(\{Q\},\{M\},\{g\};R)$ for a generic Green function, where
\begin{itemize}
 \item $r$ indicates the amplitude under consideration and we write $E\equiv |r|$ for its number of external legs. Amongst all possible amplitudes, there is a set
of amplitudes provided by the free propagators and vertices of the theory. We write $\mathcal{R}$ for this set. It is in one-to-one correspondence with field monomials in a Lagrangian approach to field theory. The set of all amplitudes is denoted by $\mathcal{A}=\mathcal{F}\cup\mathcal{R}$, which defines $\mathcal{F}$ as those amplitudes only present through quantum corrections.
\item $\{Q\}$ is the set of $E$ external momenta $q_j$ subject to the condition $\sum_{j=1}^E q_j=0$.
\item $\{M\}$ is the set of masses in the theory.
\item $\{g\}$ is the set of coupling constants specifying the theory. Below, we proceed for the case of a single coupling constant $g$,
the general case posing no principal new problems.
\item $R$ indicates the chosen renormalization scheme \cite{BK08}. \end{itemize}
We also note that a generic Green function $G^r$ has an expansion into scalar functions
\be G^r= \sum_{t(r)\in S(r)} t(r)G_{t(r)}^r(\{Q\},\{M\},\{g\};R).\ee
Here, $S(r)$ is a basis set of Lorentz covariants $t(r)$ in accordance with the quantum numbers specifying the amplitude $r$.
For each $t(r)\in S(r)$, there is a projector $P^{t(r)}$ onto this formfactor.

For example, in spinor quantum electrodynamics, the 1PI vertex function for the photon decay $p\to e^+e^- $ into a positron-electron pair $e^+(q)e^-(-q)$ measures the quantum corrections to that process described by a tree level vertex $$\gamma_\mu=t(\qedv)=\Phi(\qedv)$$
in terms of the Feynman rule $\Phi$ coming from the monomial $\bar{\psi}A\!\!\!/\psi$ in the QED Lagrangian.

At zero momentum for the photon $p$, computed in the momentum scheme $R_{\mathrm{mom}}$  that vertex function
can be decomposed in our notation as
\bea G^{\qedv} & = & \gamma_\mu G_{\gamma_\mu}^{\qedv}(\{q,-q\},m,e;R_{\mathrm{mom}})\nonumber\\ & &
+\frac{q\!\!\!/ q_\mu}{q^2} G_{\frac{q\!\!\!/ q_\mu}{q^2} }^{\qedv}(\{q,-q\},m,e;R_{\mathrm{mom}}),\eea
with projectors \be P^{\frac{q\!\!\!/ q_\mu}{q^2}}=  \frac{D}{D-1} \tr \circ \left(\frac{q\!\!\!/ q_\mu}{q^2}-\frac{1}{D}\gamma_\mu \right),\qquad P^{\gamma_\mu}=1-P^{\frac{q\!\!\!/ q_\mu}{q^2}}\ee in $D$ dimensions and where the trace is over the Dirac gamma matrices.

For $r\in\mathcal{R}$, we can write \be G^r=\Phi(r)G^r_{\Phi(r)}(\{Q\},\{M\},\{g\};R)+R^r(\{Q\},\{M\},\{g\};R),\ee
where $R^r(\{Q\},\{M\},\{g\};R)$  sums up all formfactors $t(r)$ but $\Phi(r)$ and only contributes through quantum corrections, and $\Phi$ are the unrenormalized Feynman rules.

Each $G^r$ can be obtained by the evaluation of a series of 1PI graphs
\bea
X^r(g) & = & \One - \sum_{E(\Gamma)\sim r}g^{|\Gamma|}\frac{\Gamma}{\mathrm{Sym}(\Gamma)},\forall r\in \mathcal{R},|r|=2,\label{green1}\\
X^r(g) & = & \One + \sum_{E(\Gamma)\sim r}g^{|\Gamma|}\frac{\Gamma}{\mathrm{Sym}(\Gamma)},\forall r\in \mathcal{R},|r|>2,\label{green2}\\
X^r(g) & = & \sum_{E(\Gamma)\sim r}g^{|\Gamma|}\frac{\Gamma}{\mathrm{Sym}(\Gamma)},\forall r\notin \mathcal{R},\label{green3}
\eea
where we take the minus sign for $|r|=2$ and the plus sign for $|r|>2$. Furthermore, the notation $E(\Gamma)\sim r$ indicates a sum over graphs with external leg structure in accordance with $r$.

We write $\Phi,\Phi_R$ for the unrenormalized and renormalized Feynman rules regarded as a map: $H\to \mathbb{C}$
from the Hopf algebra to  $\mathbb{C}$. In a slight abuse of notation, we use the same symbol to denote the map which assigns to an element
of $\mathcal{R}$ the corresponding Lorentz covariant, as in $\Phi(\qedv)=\gamma_\mu$.

We have
\be G^r_{t(r)}=\Phi_R^{t(r)}(X^r(g))(\{Q\},\{M\},\{g\};R),\ee
where each non-empty graph is evaluated by the  renormalized Feynman rules \be\Phi_R^{t(r)}(\Gamma):=(1-R)m(S_R^\Phi \otimes P^{t(r)}\Phi P)\Delta(\Gamma)\label{renFR}\ee
and $\Phi_R^{t(r)}(\One)=1$, and $P$ the projection into the augmentation ideal of $H$, and $R$ the renormalization map.

It is in the evaluation (\ref{renFR}) that the coproduct of the renormalization Hopf algebra appears.

The above sum over all graphs simplifies when one takes the Hochschild cohomology of the (renormalization) Hopf algebra into account:
\be X^r(g)=\delta_{r,\mathcal{R}}\One\pm \sum_{E(\gamma)\sim r;\Delta(\gamma)=\gamma\otimes \One+\One\otimes\gamma}\frac{1}{\mathrm{Sym}(\Gamma)}g^{|\gamma|}  B_+^\gamma(X^r(g)Q(g)),\ee
($-$ sign for $|r|=2$, + sign for $|r|>2$, $\delta_{r,\mathcal{R}}=1$ for $r\in \mathcal{R}, 0$ else)
with $Q(g)$ being the formal series of graphs assigned to an invariant charge of the coupling $g$:
\be Q^r(g)=\left[\frac{\overline{X}^{r,|r|>2}}{\prod_{e\in E(r)}\sqrt{X^e}}\right]^{\frac{1}{|r|-2}},\ee
where $\overline{X}^{r}=X^r$ for $r\in\mathcal{R}$ and $\overline{X}^{r}=X^r+\One$ else. Also, $B_+^\gamma$ are {\it grafting operators} which are Hochschild cocycles (cf. Section \ref{sect:ds} below).

Note that the existence of a unique such invariant charge  depends on the existence of suitable coideals in the renormalization Hopf algebra as discussed below.

There is a tower of quotient Hopf algebras
\be H_4\subset H_6\cdots\subset H_{2n}\cdots H_{\mathrm{core}}=H,\ee
obtained by restricting the coproduct to sums over graphs which are superficially divergent in $$D=4,6,\ldots,2n,\ldots,\infty$$ dimensions. They are defined via a coproduct which restricts to superficially divergent graphs $\omega_D(\Gamma)\leq 0$ in an even number of dimension $D$ greater than the critical dimension $D=4$.
\be
\Delta (\Gamma) = \Gamma \otimes 1 + 1 \otimes \Gamma + \sum_{\emptyset \subsetneq \gamma \subsetneq \Gamma; \omega_D} \gamma \otimes \Gamma/\gamma,
\ee
where $\omega_D$ restricts to disjoint unions $\gamma=\cup_i\gamma_i$ such that $\omega_D(\gamma_i)\leq 0$ for all $\gamma_i$.

\subsection{Remarks}
The above algebraic structures given by this tower of Hopf algebras underly many familiar aspects of field theory.
For example, in effective field theories, one consider couplings for any interaction in accordance with the symmetries of the theory, often suppressed by a scale which is large compared to the scales which are experimentally observable. The set $\mathcal{R}$ can then exhaust the full set $\mathcal{A}$.
Still, the Hopf algebra renormalizing the corresponding Green functions is a quotient Hopf algebra of the core Hopf algebra, as some amplitudes
might only demand counterterms from a suitably high loop number onwards.

Also, in operator product expansions we effectively enlarge the set $\mathcal{R}$ to contain any local amplitude which appears in the high-momentum Taylor expansion of a given amplitude in terms of local operator insertions, and the corresponding renormalizations in this expansion form again a quotient Hopf algebra of the core Hopf algebra.

Finally, for theories which obey a gravity power-counting, the core Hopf algebra becomes the renormalization Hopf algebra, and the corresponding co-ideal structure \cite{Kre07} is suggesting hidden renormalizability, in accordance with the recursive relations between on-shell gravity scattering amplitudes \cite{B-V07}.
\section{The core Hopf algebra}
We start by recalling the definition of the core Hopf algebra \cite{BK08} (cf. also \cite{Kre09}) built on Feynman graphs with only a scalar edge.
Besides that, we allow vertices of any valence to appear.
We omit the general case involving vertices or edges of different kinds for clarity of notation.

Recall that a {\it one-particle irreducible (1PI) graphs} is a graph which is not a tree and does not become disconnected when cutting a single internal edge.

We will use the following notation for a Feynman graph:

\medskip

\begin{tabular}{p{1cm}l}
$E, I$ & the number of external and internal lines, respectively;\\
$V_n$ &the number of vertices of valence $n$, which sum up to the total number of vertices $V$;\\
$L$ &the number of loops.
\end{tabular}

\medskip

\begin{lma}
\label{lma:degrees}
There are the following relations between these numbers:
\begin{gather*}
2I + E = \sum_n n V_n ;
\qquad
\sum_n (n-2)V_n - (E-2)= 2L.
\end{gather*}
\end{lma}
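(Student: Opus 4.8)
The plan is to prove the two identities separately, deriving the second from the first together with Euler's formula. For the first identity I would use a half-edge counting argument (the ``handshaking lemma''). Attach to each vertex of valence $n$ exactly $n$ half-edges, so that summing over all vertices counts the total number of edge-endpoints incident to vertices as $\sum_n n V_n$. On the other hand, each internal line has both of its endpoints at vertices and hence accounts for two such half-edges, while each external line has exactly one endpoint at a vertex---its other end being a free external leg---and hence accounts for exactly one. Counting the same quantity in these two ways yields
\be
\sum_n n V_n = 2I + E,
\ee
which is the first relation.

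For the second relation I would invoke the standard fact that, for a connected graph built from $V = \sum_n V_n$ vertices and $I$ internal lines, the loop number (first Betti number, i.e.\ the number of independent cycles) is
\be
L = I - V + 1.
\ee
Since the Green functions in (\ref{green1})--(\ref{green3}) are sums over 1PI graphs, which are in particular connected, this is the relevant count; external legs play no role here because they do not close into cycles. Solving the first relation for $I = \tfrac{1}{2}\left(\sum_n n V_n - E\right)$ and substituting gives
\be
2L = 2I - 2V + 2 = \left(\sum_n n V_n - E\right) - 2\sum_n V_n + 2 = \sum_n (n-2) V_n - (E-2),
\ee
which is the second relation.

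The computation is elementary, so the only real care needed is in the conventions rather than in any genuine obstacle. The point to get right is the bookkeeping of external versus internal lines in the half-edge count, namely that external legs each contribute a single half-edge, together with the connectivity assumption underlying $L = I - V + 1$. For a disconnected graph with $c$ components one would instead have $L = I - V + c$, and the right-hand side of the second identity would be shifted accordingly; restricting to connected (in fact 1PI) graphs is precisely what forces $c=1$ and produces the stated form.
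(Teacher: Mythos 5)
Your proof is correct and follows essentially the same route as the paper: a half-edge count for the first identity and substitution of Euler's formula $L = I - V + 1$ for the second. The extra remark about connectivity (forcing $c=1$) is a sensible clarification but does not change the argument.
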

\begin{proof}
The first equation follows after realizing that the left-hand-side counts the number of halflines in a graphs, which are connected to $V_n$ vertices of valence $n$, for each $n$, appearing at the right-hand-side. Moreover, subtracting twice Euler's formula $I-V+1 = L$ from it gives the second displayed equation.
\end{proof}
The above relations turn out to be quite useful later on. Let us now turn to the definition of the core Hopf algebra.

\begin{defn}
The {\rm core Hopf algebra} $H$ is the free commutative algebra (over $\C$) generated by all 1PI Feynman graphs with counit $\epsilon(\Gamma)=0$ unless $\Gamma=\emptyset$, in which case $\epsilon(\emptyset)=1$, coproduct,
\begin{align*}
\Delta (\Gamma) = \Gamma \otimes 1 + 1 \otimes \Gamma + \sum_{\emptyset \subsetneq \gamma \subsetneq \Gamma} \gamma \otimes \Gamma/\gamma,
\end{align*}
where the sum is over all disjoint unions of 1PI (proper) subgraphs in $\Gamma$. Finally, the antipode is given recursively by,
\begin{equation}
\label{antipode}
S(\Gamma) = - \Gamma - \sum_{\emptyset \subsetneq \gamma \subsetneq \Gamma} S(\gamma) \Gamma/\gamma.
\end{equation}
\end{defn}
Even though the graphs $\Gamma$ can have vertices of arbitrary valence (as opposed to the usual Feynman graphs in renormalizable perturbative quantum field theories), the Hopf algebra structure is still well-defined. Indeed, in view of the above Lemma, at a given loop order $L$ and number of external lines $E$, the maximal vertex valence that appears in the graph is finite.

The Hopf algebra is graded by loop number, since the number of loops in a subgraph $\gamma \subset \Gamma$ and in the graph $\Gamma/\gamma$ add up to $L(\Gamma)$. Another multi-grading is given by the number of vertices. In order for this to be compatible with the coproduct -- creating an extra vertex in the quotient $\Gamma/\gamma$ -- we say a graph $\Gamma$ is of multi-vertex-degree $\bar k = (k_3, k_4, \ldots)$ if
$$
V_n (\Gamma) = k_n + \delta_{n,E(\Gamma)}.
$$
One can check that this grading is compatible with the coproduct. From Lemma \ref{lma:degrees} if follows easily that the two degrees are related via $\sum_m (m-2) k_m = 2L$.

From a physical point of view, it is not so interesting to study individual graphs; rather, one considers whole sums of graphs with the same number of external lines. Namely, we study the {\it 1PI Green's functions} of (\ref{green1},\ref{green2},\ref{green3}) as elements in $H$.

Regarding the above gradings, we denote the above sum when restricted to graphs with $l$ loops by $X_l^{r,|r|=n}$. Also, the restriction of $X^{r,|r|=n}$ to graphs with $k_m + \delta_{m,n}$ vertices of valence $m$ ($m=3,4,\ldots$) will be written as $X^{r, n}_{\bar k}$ with $\bar k = (k_3, k_4, \ldots )$ as before.
\section{Hopf ideals in $H$}
In this section we address the question of how the coproduct acts on the above Green's functions $X^{r,|r|=n}$. From \cite{Sui07b} we take the following
\begin{prop}
The coproduct reads on the 1PI Green's functions ($n \geq 2$):
$$
\Delta(X^{r,|r|=n}) =
\sum_{E(\Gamma)\sim r} \prod_{m} \left[ X^{r,|r|=m} \right]^{V_m(\Gamma)} \left[ X^{r,|r|=2} \right]^{-I(\Gamma)} \otimes \frac{g^{|\Gamma|}\Gamma}{\Sym(\Gamma)}.
$$
\end{prop}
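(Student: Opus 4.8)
The plan is to apply the coproduct directly to the defining series of $X^{r,|r|=n}$ and then reorganize the resulting double sum by the \emph{quotient} graph, which will play the role of the right-hand tensor factor $\Gamma$ in the statement. Writing $X^{r,|r|=n}=\delta_{r,\mathcal{R}}\,\One\pm\sum_{E(\tilde\Gamma)\sim r}g^{|\tilde\Gamma|}\tilde\Gamma/\Sym(\tilde\Gamma)$ (signs as in (\ref{green1})--(\ref{green2})) and using that $\Delta$ is an algebra map with $\Delta(\One)=\One\otimes\One$, I obtain
\[
\Delta(X^{r,|r|=n}) = \delta_{r,\mathcal{R}}\,\One\otimes\One \pm \sum_{E(\tilde\Gamma)\sim r}\frac{g^{|\tilde\Gamma|}}{\Sym(\tilde\Gamma)}\Big(\sum_{\gamma\subseteq\tilde\Gamma}\gamma\otimes\tilde\Gamma/\gamma\Big),
\]
where the inner sum runs over $\gamma=\emptyset$ (the term $\One\otimes\tilde\Gamma$), $\gamma=\tilde\Gamma$ (the residue term $\tilde\Gamma\otimes\One$, since $\tilde\Gamma/\tilde\Gamma=\One$), and all intermediate disjoint unions of 1PI subgraphs. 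The first step is to fix the cograph $\Gamma:=\tilde\Gamma/\gamma$ and sum over all pairs $(\tilde\Gamma,\gamma)$ with $\tilde\Gamma/\gamma\cong\Gamma$; since loop number is additive under the coproduct (as noted after the definition of $H$), the weight splits as $g^{|\tilde\Gamma|}=g^{|\gamma|}g^{|\Gamma|}$, so that the right factor becomes exactly $g^{|\Gamma|}\Gamma/\Sym(\Gamma)$ and the $g^{|\gamma|}$ stays with the left insertion factor.

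The second step identifies the sum over pre-images with \emph{insertions} into the fixed residual graph $\Gamma$. Every $\tilde\Gamma$ with $\tilde\Gamma/\gamma\cong\Gamma$ arises from $\Gamma$ by placing a 1PI graph at each vertex and a (possibly empty) chain of 1PI self-energies along each internal edge, contraction of which recovers $\Gamma$. Grouping the contribution of each insertion place: at a vertex of valence $m$ the insertions of all 1PI $m$-point graphs, together with the empty insertion $\One$, resum with their $g^{|\gamma_v|}$ weights precisely to $X^{r,|r|=m}$, producing $\prod_m [X^{r,|r|=m}]^{V_m(\Gamma)}$; along an internal edge the sum over self-energy chains of length $0,1,2,\dots$ is the geometric series $\One+\Sigma+\Sigma^2+\cdots$ in the 1PI two-point sum $\Sigma$, and since the $-$ sign in (\ref{green1}) gives $X^{r,|r|=2}=\One-\Sigma$, this resums to $[X^{r,|r|=2}]^{-1}$, hence to $[X^{r,|r|=2}]^{-I(\Gamma)}$ over the $I(\Gamma)$ edges. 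This simultaneously explains the negative power and the minus-sign convention for the two-point function. The extreme cases match the boundary terms: the empty insertion over a genuine loop graph reproduces $\One\otimes(\text{loop part of }X^{r,|r|=n})$, while the fully contracted pieces ($\Gamma$ a single valence-$n$ vertex, contributing $\One$ on the right with left factor $X^{r,|r|=n}$) reproduce $X^{r,|r|=n}\otimes\One$.

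The hard part is the bookkeeping of symmetry factors: I must show that summing $1/\Sym(\tilde\Gamma)$ over all $\tilde\Gamma$ contracting to a fixed $\Gamma$, organized by insertion place, reproduces $\frac{1}{\Sym(\Gamma)}\prod_{v}\frac{1}{\Sym(\gamma_v)}\cdots$ with \emph{no} stray rational prefactors, so that the left factor genuinely factorizes across vertices and edges into the advertised product of Green's functions. The content is that the number of distinct ways to insert a given tuple of subgraphs into $\Gamma$, the order of $\Aut(\Gamma)$ acting by permuting identical insertion places, the orders of the automorphism groups of the inserted pieces, and $\Sym(\tilde\Gamma)$ all conspire to make the weighted insertion sum multiplicative. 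I would isolate this as a separate lemma --- the multiplicativity of $1/\Sym$ under insertion, which underlies \cite{Sui07b} --- and prove it by exhibiting the bijection between isomorphism classes of decorated residual graphs and the pairs $(\tilde\Gamma,\gamma)$ of the coproduct, checking that it is compatible with the $1/\Sym$ weights by comparing $\Aut(\tilde\Gamma)$ to the automorphisms of $\Gamma$ permuting equivalent insertion slots together with those of the insertions. Once this compatibility is in hand, the remaining resummations (the per-vertex sums to $X^{r,|r|=m}$ and the geometric series per edge) are formal, and collecting them yields the stated formula.
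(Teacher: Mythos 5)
The paper offers no proof of this proposition: it is imported verbatim from \cite{Sui07b}, so there is nothing in the present text to compare your argument against line by line. Your outline is, however, essentially the proof given in that reference: expand the series, apply $\Delta$ termwise, reorganize the double sum by the cograph $\Gamma=\tilde\Gamma/\gamma$, recognize the fibre over a fixed $\Gamma$ as the set of decorations of $\Gamma$ by 1PI insertions at vertices and self-energy chains on internal edges, and resum --- the per-vertex sums giving $X^{r,|r|=m}$ and the geometric series $\One+\Sigma+\Sigma^2+\cdots=[X^{r,|r|=2}]^{-1}$ per edge giving the power $-I(\Gamma)$, which also explains the sign convention in (\ref{green1}). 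The one place where all the actual work lives is the step you yourself flag: the compatibility of the $1/\Sym$ weights with this reorganization. That is not a routine verification; it is the counting identity relating the number of subgraphs $\gamma'\subset\Gamma'$ with $\gamma'\cong\gamma$ and $\Gamma'/\gamma'\cong\Gamma$ to the number of insertion places and the orders of the relevant automorphism groups, and a complete proof must supply it in full, including the multinomial factors that arise when several identical self-energies sit on one edge or identical vertex corrections occupy equivalent vertices. Since you state the correct lemma and the correct strategy for proving it (comparing $\Aut(\tilde\Gamma)$ with the automorphisms of $\Gamma$ permuting equivalent slots combined with those of the insertions), the proposal is sound in structure, but it is a proof modulo that lemma rather than a complete proof.
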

\begin{corl}
\label{corl:cop-green}
The coproduct takes the following form on the 1PI Green's functions ($|r| \geq 2$):
$$
\Delta(X^{r,|r|=n}) = \sum_{\bar k} X^{r,|r|=n} \prod_{m=3}^\infty \left[ \frac{X^{r,|r|=m} }{\left(X^{r,|r|=2} \right)^{m/2}} \right]^{k_m}   \otimes X^{r,|r|=n}_{\bar k}.
$$
\end{corl}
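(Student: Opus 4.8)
The plan is to obtain the corollary from the proposition purely by reorganizing the sum over graphs according to the multi-vertex grading $\bar k=(k_3,k_4,\ldots)$. Since the condition $E(\Gamma)\sim r$ fixes $E(\Gamma)=|r|=n$, I would first split $\sum_{E(\Gamma)\sim r}$ into an outer sum over $\bar k$ and an inner sum over all $\Gamma$ of multi-vertex-degree $\bar k$, i.e.\ with $V_m(\Gamma)=k_m+\delta_{m,n}$. By the definition of $X^{r,|r|=n}_{\bar k}$, the inner sum $\sum_\Gamma g^{|\Gamma|}\Gamma/\Sym(\Gamma)$ at fixed $\bar k$ is exactly $X^{r,|r|=n}_{\bar k}$, which already reproduces the second tensor slot of the statement.

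It then remains to show that the first tensor factor of the proposition, $\prod_m[X^{r,|r|=m}]^{V_m(\Gamma)}[X^{r,|r|=2}]^{-I(\Gamma)}$, depends on $\Gamma$ only through $\bar k$ and equals the bracketed expression in the corollary. Substituting $V_m(\Gamma)=k_m+\delta_{m,n}$ into the vertex product splits off one factor $X^{r,|r|=n}$ (coming from the Kronecker delta, i.e.\ from the root vertex) times $\prod_m[X^{r,|r|=m}]^{k_m}$. For the propagator power I would invoke the first relation of Lemma~\ref{lma:degrees}, $2I+E=\sum_m mV_m$; feeding in the same substitution and using $E=n$ makes the external-leg contribution cancel and yields $I(\Gamma)=\half\sum_m m\,k_m$, so that $I$ too is a function of $\bar k$ alone.

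Combining the two, the first tensor factor becomes $X^{r,|r|=n}\prod_m\bigl[X^{r,|r|=m}(X^{r,|r|=2})^{-m/2}\bigr]^{k_m}$, and since the $m=2$ term of this product is identically $1$ the product may be taken from $m=3$ onwards, giving precisely the corollary. The computation is essentially exponent bookkeeping once the grouping is fixed; the step I would be most careful about is the Kronecker-delta/root-vertex accounting --- checking that the single extra factor $X^{r,|r|=n}$ is produced correctly (including the degenerate case $n=2$, where it must instead emerge from the shift in $I(\Gamma)$) and that $I(\Gamma)$ genuinely collapses to a function of $\bar k$. This is exactly what Lemma~\ref{lma:degrees}, together with the compatibility of the multi-vertex grading with the coproduct, guarantees.
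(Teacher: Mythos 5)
Your proposal is correct and follows exactly the route the paper intends: the paper's one-line proof ("apply the first equation of Lemma~\ref{lma:degrees} together with the definition of the multigrading") is precisely your exponent bookkeeping, with $2I+E=\sum_m mV_m$ and $V_m=k_m+\delta_{m,n}$ collapsing $I(\Gamma)$ to $\tfrac12\sum_m mk_m$ (shifted by $1$ when $n=2$) and splitting off the factor $X^{r,|r|=n}$. Your explicit handling of the $n=2$ case, where the extra factor arises from the propagator exponent rather than the root vertex, is a correct and worthwhile detail the paper leaves implicit.
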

\begin{proof}
This follows easily by applying the first Equation in Lemma \ref{lma:degrees}, in combination with the definition of the multigrading $k_m$.
\end{proof}
Next, we define the following `couplings' $Q^\ext{m}$ in $H$
$$
Q^\ext{m} = \left[ \frac{\overline{X}^{r,|r|=m} }{\left(X^{r,|r|=2} \right)^{m/2}} \right]^{1/({m-2)}}; \qquad (m>2),
$$
which, when restricted to loop order $l$, are denoted by $Q^\ext{m}_l$.
\begin{prop}
The ideal $I = \langle Q^\ext{m}_l - Q^\ext{n}_l \rangle$ where $m,n \geq 3$ and $l \geq 0$ is a Hopf ideal, {i.e.}
$$
\Delta(I) \subset I \otimes H + H \otimes I, \qquad \epsilon(I) = 0, \qquad S(I) \subset I.
$$
\end{prop}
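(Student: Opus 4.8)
The plan is to verify the three defining properties, the only substantial one being the coideal property $\Delta(I)\subset I\otimes H+H\otimes I$. I would first record the cheap reductions. The ideal $I$ is graded for the loop grading, being generated by the homogeneous elements $Q^\ext{m}_l-Q^\ext{n}_l$. The counit condition is immediate: $Q^\ext{m}_0=\One=Q^\ext{n}_0$ so the generators vanish in loop degree $0$, while in positive loop degree they lie in the augmentation ideal, whence $\epsilon(Q^\ext{m}_l-Q^\ext{n}_l)=0$ and $\epsilon(I)=0$. Once $I$ is known to be a graded biideal, the antipode condition $S(I)\subset I$ is then automatic, since $H$ is graded connected: $H/I$ is a graded connected bialgebra, hence a Hopf algebra, and the quotient map $\pi:H\to H/I$, being a bialgebra morphism between Hopf algebras, commutes with antipodes, so $x\in I$ forces $\pi(S(x))=\bar S(\pi(x))=0$, i.e.\ $S(x)\in\ker\pi=I$. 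Everything thus reduces to coidealness.

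Since $I\otimes H+H\otimes I=\ker(\pi\otimes\pi)$ and $\Delta$ preserves the loop grading (along which $I$ is homogeneous), it suffices to show that $(\pi\otimes\pi)\Delta\!\left(Q^\ext{m}\right)$ is independent of $m\geq3$ for the full series. The engine will be a clean coproduct formula for the invariant charges. Writing $\phi_j:=X^{r,|r|=j}/(X^{r,|r|=2})^{j/2}$ and decomposing $Y\in H$ into multi-vertex-degree components $Y=\sum_{\bar k}Y_{\bar k}$, I introduce the reduced coproduct
\be
\tilde\Delta(Y):=\sum_{\bar k}\Big(\prod_{j\geq3}\phi_j^{k_j}\Big)\otimes Y_{\bar k}.
\ee
Because the multi-vertex-degree is an algebra grading and the prefactor is exponential in $\bar k$, the map $\tilde\Delta$ is an algebra homomorphism. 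Corollary \ref{corl:cop-green} says precisely that $\Delta(X^{r,|r|=n})=(X^{r,|r|=n}\otimes\One)\,\tilde\Delta(X^{r,|r|=n})$; since $\Delta$ and $\tilde\Delta$ are both homomorphisms and $H\otimes H$ is commutative, the identity $\Delta(Y)=(Y\otimes\One)\tilde\Delta(Y)$ propagates to all products, inverses and formal fractional powers of the Green functions, in particular to $Q^\ext{m}$, giving
\be
\Delta\!\left(Q^\ext{m}\right)=\sum_{\bar k}Q^\ext{m}\Big(\prod_{j\geq3}\phi_j^{k_j}\Big)\otimes\big(Q^\ext{m}\big)_{\bar k}.
\ee

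I would then push this through $\pi\otimes\pi$. In $H/I$ all the $Q^\ext{j}$ coincide with a single element $q:=\pi(Q^\ext{j})$, and by definition $(Q^\ext{j})^{j-2}=\overline X^{r,|r|=j}/(X^{r,|r|=2})^{j/2}$, so that $\pi(\prod_{j}\phi_j^{k_j})=q^{\sum_{j}(j-2)k_j}=q^{2L(\bar k)}$ by the second relation of Lemma \ref{lma:degrees}; crucially this depends on $\bar k$ only through the loop number $L(\bar k)$. Regrouping the right-hand factors by $l'=L(\bar k)$ collapses $\sum_{\bar k:\,L(\bar k)=l'}\pi\big((Q^\ext{m})_{\bar k}\big)$ into $\pi\big(Q^\ext{m}_{l'}\big)$, and $Q^\ext{m}_{l'}-Q^\ext{n}_{l'}\in I$ makes this $m$-independent. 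Hence
\be
(\pi\otimes\pi)\Delta\!\left(Q^\ext{m}\right)=\sum_{l'\geq0}q^{\,1+2l'}\otimes\pi\big(Q^\ext{m}_{l'}\big)
\ee
is the same for every $m\geq3$, which is the coideal property.

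The hard part will be the discrepancy between the unbarred $\phi_j$ appearing in Corollary \ref{corl:cop-green} and the barred $\overline X$ entering the definition of $Q^\ext{j}$: the relation $(Q^\ext{j})^{j-2}=\overline X^{r,|r|=j}/(X^{r,|r|=2})^{j/2}$ yields $\pi(\phi_j)=q^{j-2}$ verbatim only when $\overline X^{r,|r|=j}=X^{r,|r|=j}$, i.e.\ when the valence $j$ lies in $\mathcal{R}$, and the collapse of the left leg to a power of $q^{2L(\bar k)}$ is exactly what makes the regrouping work. When valences outside $\mathcal{R}$ contribute to the sum over $\bar k$, one must either specialise to the setting $\mathcal{R}=\mathcal{A}$ (all vertices present, as in the effective and gravitational theories motivating the paper) or separately control the correction terms $\overline X^{r,|r|=j}-X^{r,|r|=j}$, which is the delicate point. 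By contrast the fractional powers cause no trouble, being handled by formal binomial expansion in the loop-graded completion, and the propagation of $\Delta(Y)=(Y\otimes\One)\tilde\Delta(Y)$ from the Green functions to $Q^\ext{m}$ is the key structural input.
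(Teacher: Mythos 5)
Your proof follows essentially the same route as the paper's: apply Corollary \ref{corl:cop-green}, use the relation $\sum_m(m-2)k_m=2l$ to collapse the left tensor factor to $X^{r,|r|=n}\,(Q^{\ext{3}})^{2l}$ modulo $I$, and then extend to the formal fractional powers defining the $Q^{\ext{m}}$. The barred-versus-unbarred discrepancy you single out as the delicate point is likewise left implicit in the paper, which otherwise handles the counit and antipode conditions even more tersely than you do.
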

\begin{proof}
From the relation $\sum_m (m-2)k_m=2l$ between the multigrading $k_m$ by number of vertices and the loop order $l$, it follows that we have
$$
\Delta(X^{r,|r|=n})= \sum_l X^{r,|r|=n} \left( Q^\ext{3} \right)^{2l} \otimes X_l^{r,|r|=n} + I \otimes H.
$$
Indeed, modulo $I$, one can replace each $(Q^\ext{m})^{(m-2)k_m}$ by $(Q^\ext{3})^{(m-2)k_m}$. This leads precisely to $\sum_m (m-2)k_m=2l$ factors of $Q^\ext{3}$. Extending this to formal powers (such as $\frac{-m}{2(m-2)}$ of Green's functions appearing in the couplings $Q^\ext{m}$)) this leads to
$$
\Delta(Q^\ext{n})= \sum_l Q^\ext{n} \left( Q^\ext{3} \right)^{2l} \otimes Q_l^\ext{n} + I \otimes H
$$
from which the claim follows.
\end{proof}
This implies that the quotient $H/I$ is a Hopf algebra in which the relations $Q^\ext{m} = Q^\ext{n}$ hold; we will also set $Q= Q^\ext{m}$. A recursive way of writing these relations is
\be
\frac{ X^{r,|r|=n}} { X^{r,|r|=n-1}}  = \frac{X^{r,|r|=n-1}} {X^{r,|r|=n-2}} .\label{STcore}
\ee
In a non-abelian gauge theories the above identities actually hold between the corresponding physical amplitudes so that Feynman rules provide an element of $\mathrm{Spec}(H/I)$ for $n=4$
and are kwown as the Slavnov--Taylor identities for the couplings. Note that then countertems in a chosen renormalization scheme furnish an element in $\mathrm{Spec}(H/I)$. For renormalized amplitudes, the choice of a renormalization condition for any of the vertices in $\mathrm{R}$ then determines
Feynman rules in $\mathrm{Spec}(H/I)$ for any such choice. See \cite{CelmGon} for an excellent discusion of such choices.

We conclude this section with an expression for the coproduct on Green's functions in the quotient.
\begin{prop}
\label{prop:cop-quotient}
In the quotient Hopf algebra $H/I$ we have
$$
\Delta(X^{r,|r|=n}) = \sum_l X_{l,n} \otimes X_l^{r,|r|=n}
$$
where we have denoted $X_{l,n} = X^{r,|r|=n} Q^{2l}$.
\end{prop}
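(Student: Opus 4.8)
The plan is to obtain this coproduct formula by simply reading off, in the quotient $H/I$, the identity that was already essentially established in the proof of the preceding Hopf-ideal Proposition. Recall that there one derives
$$
\Delta(X^{r,|r|=n}) = \sum_l X^{r,|r|=n}\left(Q^\ext{3}\right)^{2l}\otimes X_l^{r,|r|=n} + I\otimes H .
$$
Since in $H/I$ the remainder $I\otimes H$ vanishes and all the couplings coincide, $Q^\ext{m}=Q=:Q^\ext{3}$, the left tensor factor becomes $X^{r,|r|=n}Q^{2l}=X_{l,n}$, which is precisely the asserted closed form. So the whole argument is really a matter of transporting an upstream identity into the quotient; the bulk of the work has already been done.

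To make this self-contained I would instead start from Corollary~\ref{corl:cop-green}. The first step is to recognise the bracketed factor there as a power of a coupling: by the definition of $Q^\ext{m}$ one has $X^{r,|r|=m}/\left(X^{r,|r|=2}\right)^{m/2}=\left(Q^\ext{m}\right)^{m-2}$, so the left leg of $\Delta(X^{r,|r|=n})$ reads $X^{r,|r|=n}\prod_{m\geq 3}\left(Q^\ext{m}\right)^{(m-2)k_m}$. The second step is to pass to $H/I$, where the defining relations let me replace every $Q^\ext{m}$ by $Q$, collapsing the product to $Q^{\sum_m (m-2)k_m}$; invoking the degree relation $\sum_m (m-2)k_m=2l$ coming from Lemma~\ref{lma:degrees}, this exponent equals $2l$ independently of the individual $k_m$. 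The third step is to regroup the sum in Corollary~\ref{corl:cop-green} by loop number $l$ rather than by the full multidegree $\bar k$, using that summing the components $X^{r,|r|=n}_{\bar k}$ over all $\bar k$ with $\sum_m (m-2)k_m=2l$ recovers the loop-$l$ restriction $X_l^{r,|r|=n}$. These three steps together give $\Delta(X^{r,|r|=n})=\sum_l X_{l,n}\otimes X_l^{r,|r|=n}$.

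The computation is short, so the only genuine points of care are bookkeeping rather than depth. I would handle the formal (fractional) powers of Green's functions entering the $Q^\ext{m}$ exactly as in the preceding proof, by extending the substitution $Q^\ext{m}\mapsto Q^\ext{3}$ to formal power series, and I would check that the overline in $\overline{X}^{r,|r|=m}$, relevant when $r\notin\mathcal{R}$, does not interfere with the collapse of the product. The main potential obstacle is therefore not any single deep estimate but verifying that the regrouping by loop order and the passage to fractional powers are both compatible with the grading, so that the rearranged sum genuinely reproduces the stated form; granting the degree relation of Lemma~\ref{lma:degrees} and the grading compatibility of the multidegree $\bar k$, this compatibility is immediate and the proof closes.
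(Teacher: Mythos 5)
Your proof is correct and matches the paper's (implicit) argument: the proposition is stated without proof precisely because the identity $\Delta(X^{r,|r|=n})=\sum_l X^{r,|r|=n}\bigl(Q^\ext{3}\bigr)^{2l}\otimes X_l^{r,|r|=n} + I\otimes H$ was already established in the proof of the preceding Hopf-ideal proposition, and passing to $H/I$ (where all $Q^\ext{m}$ coincide with $Q$) yields the claim at once. Your self-contained rederivation from Corollary~\ref{corl:cop-green} is just a careful unpacking of that same computation, with the right points of bookkeeping (formal powers, the overline, regrouping $\bar k$ by loop order via $\sum_m(m-2)k_m=2l$) correctly flagged.
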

\begin{corl}
\label{corl:cop-Xln}
$$
\Delta(X_{l,n}) = \sum_j X_{l+j,n} \otimes (X_{l,n})_j.
$$
\end{corl}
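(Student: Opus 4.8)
The plan is to exploit that $\Delta$ is an algebra homomorphism together with the factorization $X_{l,n} = X^{r,|r|=n}\,Q^{2l}$ introduced in Proposition~\ref{prop:cop-quotient}. Thus I would write $\Delta(X_{l,n}) = \Delta(X^{r,|r|=n})\,\Delta(Q)^{2l}$ and reduce the statement to computing the two factors separately. The coproduct of $X^{r,|r|=n}$ is already furnished by Proposition~\ref{prop:cop-quotient}, namely $\Delta(X^{r,|r|=n}) = \sum_a X_{a,n}\otimes X_a^{r,|r|=n}$, so the only genuinely new ingredient is controlling the powers of the coupling $Q$.

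For this, recall from the proof of the Hopf-ideal Proposition that, after passing to $H/I$ where all the $Q^\ext{m}$ coincide with $Q$ and the error term $I\otimes H$ is killed, one has $\Delta(Q) = \sum_j Q^{2j+1}\otimes Q_j$, with $Q_j$ the loop-order-$j$ part of $Q$. I would then compute $\Delta(Q^{2l}) = \Delta(Q)^{2l}$ by expanding the product of $2l$ copies of $\sum_j Q^{2j+1}\otimes Q_j$: in the left tensor slot the exponents $2b_i+1$ sum to $2(\sum_i b_i)+2l$, so collecting terms by the total loop number $j=\sum_i b_i$ and using that the loop grading is additive under multiplication gives $\Delta(Q^{2l}) = \sum_j Q^{2(l+j)}\otimes (Q^{2l})_j$, where $(Q^{2l})_j$ denotes the loop-$j$ component of $Q^{2l}$.

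It then remains to multiply the two coproducts. Writing $X\equiv X^{r,|r|=n}$ for brevity, the product $\Delta(X)\Delta(Q^{2l})$ produces, in the left slot, factors $X\,Q^{2a}\,Q^{2(l+b)} = X\,Q^{2(l+a+b)} = X_{l+a+b,\,n}$, while the right slot carries $X_a^{r,|r|=n}\,(Q^{2l})_b$. Setting $j=a+b$ and invoking once more the additivity of the loop grading, the right slot reassembles into $\sum_{a+b=j} X_a^{r,|r|=n}(Q^{2l})_b = (X\,Q^{2l})_j = (X_{l,n})_j$, the loop-$j$ component of $X_{l,n}$. Since the left slot is then exactly $X_{l+j,n}$, summing over $j$ yields the claimed $\Delta(X_{l,n}) = \sum_j X_{l+j,n}\otimes (X_{l,n})_j$.

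The computation is essentially bookkeeping, and I expect the main obstacle to be purely notational: keeping the two regroupings honest, i.e.\ verifying that collecting monomials by total loop order really turns $\sum_{a+b=j}(\,\cdot\,)_a(\,\cdot\,)_b$ into the loop-$j$ part of a product. This is guaranteed by the fact, noted after the definition of the core Hopf algebra, that $H$ is graded by loop number, so that loop number is additive on products; and by working inside $H/I$, where the single coupling $Q$ legitimately replaces every $Q^\ext{m}$. The counit and antipode compatibilities need no separate argument, being automatic consequences of the coproduct identity.
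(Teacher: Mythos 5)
Your proof is correct, and it supplies exactly the bookkeeping the paper omits: the paper states this as an immediate corollary of Proposition~\ref{prop:cop-quotient} with no written proof, the intended argument being precisely yours --- apply the algebra-map property of $\Delta$ to $X_{l,n}=X^{r,|r|=n}Q^{2l}$, use the coproduct of $Q$ extracted from the Hopf-ideal proposition in $H/I$, and regroup by total loop order using additivity of the loop grading. No gaps; the only point worth flagging (which you implicitly handle, as does the paper) is that the loop grading extends to the formal powers of Green's functions appearing in $Q$.
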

\section{Hopf subalgebras and Dyson--Schwinger equations}
\label{sect:ds}
Another way to describe the Green's function is in terms of so-called grafting operators, defined in terms of 1PI primitive graphs.
We start by considering maps $B_+^\gamma: H\to \mathrm{Aug}$, with $\mathrm{Aug}$ the augmentation ideal, which will soon lead us to non-trivial one co-cycles in the Hochschild cohomology of $H$. They are defined as follows.
\be B_+^\gamma(h)=\sum_{\Gamma\in <\Gamma>}\frac{{\textbf{bij}(\gamma,h,\Gamma)}}{|h|_\vee}\frac{1}{\textrm{maxf}(\Gamma)}\frac{1}{(\gamma|h)}\Gamma,\label{def}\ee
where maxf$(\Gamma)$ is the number of maximal forests of $\Gamma$, $|h|_\vee$ is the number of distinct graphs obtainable by permuting edges of $h$, $\textbf{bij}(\gamma,h,\Gamma)$ is the number of bijections of external edges of $h$ with an insertion place in $\gamma$ such that the result is $\Gamma$,
and finally $(\gamma|h)$ is the number of insertion places for $h$ in $\gamma$ \cite{Kre05}. $\sum_{\Gamma\in <\Gamma>}$ indicates a sum over the linear span $<\Gamma>$ of generators of $H$.

The sum of the $B^\gamma_+$ over all primitive 1PI Feynman graphs at a given loop order and with given residue will be denoted by $B^{l;n}_+$, as in {\it loc. cit.}. More precisely,
$$
B^{l;n}_+ = \sum_{\begin{smallmatrix} \gamma ~\prim \\ l(\gamma)=l \\ E(\gamma)=n \end{smallmatrix}} \frac{1}{\Sym(\gamma)} B^\gamma_+.
$$
With this and the formulas of the previous section on QCD, we can prove the analog of the {\it gauge theory theorem} as formulated in \cite[Theorem 5]{Kre05}:
\begin{thm}
\label{thm:gauge}
Let $\tilde H=H/I$ be the core Hopf algebra with relations $Q_\ext{n} = Q_\ext{m}$ as before.
\begin{enumerate}
\item $X^{r,|r|=n} = \sum_{l=0}^\infty B_+^{l;n} (X_{l,n})$.
\item $\Delta( B_+^{l;n} (X_{l,n})) = B_+^{l;n} (X_{l,n}) \otimes \I + (\id \otimes B_+^{l;n}) \Delta (X_{l,n})$.
\item $\Delta( X_l^{r,|r|=n}) = \sum_{j=0}^l \textup{Pol}^n_j(X) \otimes X_{l-j}^\ext{r,|r|=n}$.
\end{enumerate}
where $\textup{Pol}^r_j(X)$ is a polynomial in the $X^{r,|r|=n}_m$ of degree $j$, determined as the order $j$ term in the loop expansion of $X^{r,|r|=n} Q^{2l-2j}$.
\end{thm}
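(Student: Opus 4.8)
The plan is to transport the proof of \cite[Theorem 5]{Kre05} from the renormalization Hopf algebra to the core Hopf algebra, the only genuinely new features being that arbitrary vertex valences occur and that we work in the quotient $\tilde H=H/I$, where the invariant charges are identified, $Q^\ext{m}=Q$ for all $m$. Since the combinatorial identities underlying the grafting operators never invoke the power-counting weight $\omega_D$, I expect them to carry over essentially verbatim, so that most of the work reduces to bookkeeping in the quotient.

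For (1) I would read off the combinatorial Dyson--Schwinger equation by induction on the loop order. The content is that every 1PI generator with residue $r$ arises by inserting a product of lower-order graphs into a unique primitive $\gamma$, and that the weights $\textbf{bij}(\gamma,h,\Gamma)$, $\textup{maxf}(\Gamma)$, $|h|_\vee$ and $(\gamma|h)$ in (\ref{def}) are normalized precisely so that each $\Gamma$ is produced with coefficient $1/\Sym(\Gamma)$. Grouping the primitives $\gamma$ by their loop number $l$ and invoking the first relation of Lemma \ref{lma:degrees} together with $\sum_m(m-2)k_m=2l$ identifies the argument of $B_+^{l;n}$ as $X_{l,n}=X^{r,|r|=n}Q^{2l}$; it is exactly the identification $Q^\ext{m}=Q$ in $\tilde H$ that allows insertions at vertices of different valence to be absorbed into a single invariant charge, which is why the statement is one about the quotient.

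Part (2) is the Hochschild one-cocycle property, evaluated at $X_{l,n}$. By the normalization built into (\ref{def}), the summed operator $B_+^{l;n}$ is a one-cocycle on $H$, so that $\Delta\circ B_+^{l;n}=B_+^{l;n}\otimes\I+(\id\otimes B_+^{l;n})\circ\Delta$ holds identically; evaluating on a lift of $X_{l,n}$ and projecting along $I$ --- legitimate because $I$ is a Hopf ideal by the preceding Proposition --- yields the claim in $\tilde H$. The delicate point, which I expect to be the main obstacle, is verifying this cocycle identity for the \emph{summed} operator in the presence of arbitrary valence: the terms produced when $\Delta$ acts on a single inserted primitive only reassemble into cocycle form after one sums over all primitives of fixed loop order and residue, and one must further check that this structure descends consistently to the quotient.

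Finally, (3) follows by separating loop degrees in Proposition \ref{prop:cop-quotient}. The coproduct is homogeneous for the loop grading, $X_l^{r,|r|=n}$ sits in loop degree $l$, whereas $X_{l,n}=X^{r,|r|=n}Q^{2l}$ carries a full loop expansion; writing $[\,\cdot\,]_p$ for the loop-$p$ component and matching total loop degree $L=(L-l)+l$ in $\Delta(X^{r,|r|=n})=\sum_l X_{l,n}\otimes X_l^{r,|r|=n}$ gives, after reindexing by $j=L-l$, the identity $\Delta(X_L^{r,|r|=n})=\sum_{j=0}^{L}[X_{L-j,n}]_j\otimes X_{L-j}^{r,|r|=n}$. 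Here $[X_{L-j,n}]_j$ is by definition the order-$j$ term in the loop expansion of $X^{r,|r|=n}Q^{2L-2j}$, that is $\textup{Pol}^n_j(X)$, and additivity of the loop grading shows it is a polynomial of degree $j$ in the loop pieces $X_m^{r,|r|=n}$. Once (2) is secured, (1) is combinatorial bookkeeping and (3) is immediate.
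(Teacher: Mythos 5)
Your treatments of (1) and (3) track the paper: for (1) the point is indeed that the weights $\textbf{bij}(\gamma,h,\Gamma)$, $\mathrm{maxf}(\Gamma)$, $|h|_\vee$ and $(\gamma|h)$ in (\ref{def}) are normalized so that each graph is produced with coefficient $1/\mathrm{Sym}(\Gamma)$ (the paper makes this precise by labelling external edges of $h$ and internal edges of $\gamma$ and invoking the operadic composition of Lemma 4 of \cite{BergKr06}), and (3) is exactly the loop-degree bookkeeping applied to Proposition \ref{prop:cop-quotient} that you describe. The problem is part (2). You assert that ``by the normalization built into (\ref{def}), the summed operator $B_+^{l;n}$ is a one-cocycle on $H$, so the identity holds identically,'' and then you yourself flag the verification of precisely this as ``the main obstacle'' without resolving it. Since that identity \emph{is} claim (2), the proposal is circular on the one nontrivial point. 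Moreover the assertion is stronger than what is established or needed: as in the gauge-theory setting of \cite{Kre05}, the cocycle identity is only proved when evaluated on the distinguished elements $X_{l,n}$, not identically on all of $H$; and contrary to your expectation that the terms ``only reassemble into cocycle form after one sums over all primitives,'' the Proposition following the theorem shows the identity already holds for each individual $B_+^\gamma$ on these arguments.

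The paper's actual mechanism for (2) is different and avoids any direct combinatorial verification of the cocycle property. One refines Corollary \ref{corl:cop-green} to the partial sum $X^{r,|r|=n,\gamma}$ over graphs whose primitive residue is a fixed primitive $\gamma$, obtaining, after imposing (\ref{STcore}) to reduce to the single coupling $Q$,
$$
\Delta\bigl( X^{{r,|r|=n},\gamma} \bigr) = X^{{r,|r|=n},\gamma} \otimes 1 + \sum_{l\geq 1} X_{l,n} \otimes X^{{r,|r|=n},\gamma}_l .
$$
Since $X^{r,|r|=n,\gamma}=B_+^\gamma(X_{l,n})$ by part (1), comparing this with Corollary \ref{corl:cop-Xln} for $\Delta(X_{l,n})$ yields the cocycle identity for each $B_+^\gamma$ evaluated at $X_{l,n}$, and summing over primitives at loop order $l$ with $n$ external legs gives (2). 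If you wish to keep your route you must actually supply the combinatorial argument you defer; otherwise replace it by this derivation from the coproduct formulas already at hand.
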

\begin{proof}
The first claim follows as in \cite{Kre05}. $B_+^\gamma$ acts on arguments which have multiplicity $(\gamma|h)\times|h|_\vee$.
We hence have to divide by this multiplicity, and by the number $\mathrm{maxf}(\Gamma)$ of ways to generate $\Gamma$.
This by construction generates any graph with weight $1/\mathrm{Sym}(\Gamma)$ \cite{Kre05}. Indeed, assume for a moment that we label the external edges of $h$ and internal edges of $\gamma$ and that we keep those labels in the bijections which define $\Gamma$. Then each labeled graph is generated once, the bijections define an operadic composition and the assertion follows as in the operadic proof of Lemma 4 of \cite{BergKr06}.

For the second claim, we first enhance the result of Corollary \ref{corl:cop-green} to partial sums in $X^{r,|r|=n}$ over graphs that have `primitive residue' isomorphic to a fixed primitive graph $\gamma$. In other words, if $X^{{r,|r|=n}, \gamma}$ is the part of $X^{r,|r|=n}$ that sums only over graphs that are obtained by inserting graphs into the primitive graph $\gamma$, then
$$
\Delta( X^{{r,|r|=n},\gamma} ) = X^{{r,|r|=n},\gamma} \otimes 1 + \sum_{l=1}^\infty X_{l,n} \otimes ( X^{{r,|r|=n},\gamma}_l ).
$$
Here we have imposed (\ref{STcore}) to write this in terms of the single coupling $Q$.
Since $X^{{r,|r|=n},\gamma} = B^\gamma_+ ( X_{l,n} )$, a combination of this formula with Corollary \ref{corl:cop-Xln} yields
$$
\Delta(B^\gamma_+ ( X_{l,n} )) = B^\gamma_+ ( X_{l,n}) \otimes \I + (\id \otimes B^\gamma_+) \Delta( X_{l,n}).
$$
Then, summing over all primitive graphs with $n$ external lines at loop order $l$ gives the desired result.
\end{proof}
Note that our formulation above is somewhat redundant: in the core Hopf algebra all primitives have a single loop, $|\gamma|=1$.
But in the given form, the results remain applicable to any of the before-mentioned quotient Hopf algebras, where primitives appear beyond the first order still.

In fact, this proves the slightly stronger result that every $B_+^\gamma$ defines a Hochschild 1-cocycle:
\begin{prop}
For $\gamma$ a primitive graph at loop order $k$ and residue $r$, we have
$$
\Delta( B_+^\gamma (X_{l,n})) = B_+^\gamma (X_{l,n}) \otimes \I + (\id \otimes B_+^\gamma) \Delta (X_{l,n}).
$$
\end{prop}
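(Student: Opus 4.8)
The plan is to recognise that this per-graph cocycle identity is precisely the intermediate statement already isolated in the proof of Theorem \ref{thm:gauge}(2): there the cocycle property of the summed operator $B_+^{l;n}$ was obtained by first establishing it for each individual grafting operator $B_+^\gamma$ and only afterwards summing over primitives of fixed residue and loop order. So I would verify that per-graph step on its own, for a single fixed primitive $\gamma$.

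First I would attach to $\gamma$ the partial Green's function $X^{r,|r|=n,\gamma}$, defined as the restriction of $X^{r,|r|=n}$ to those graphs $\Gamma$ produced by inserting 1PI subgraphs into the insertion places of $\gamma$. The structural identity built into the definition (\ref{def}) is that this partial sum is exactly $B_+^\gamma$ applied to the decorating Green's function, $X^{r,|r|=n,\gamma} = B_+^\gamma(X_{l,n})$ (with $l$ recording the loop content carried by the decorations); the weights $1/(\gamma|h)$, $1/|h|_\vee$ and $1/\mathrm{maxf}(\Gamma)$ in (\ref{def}) are arranged precisely so that each $\Gamma$ is produced with its symmetry weight $1/\Sym(\Gamma)$, as in the operadic count of \cite{BergKr06}.

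Next I would enhance Corollary \ref{corl:cop-green} to this partial sum, establishing
$$
\Delta(X^{r,|r|=n,\gamma}) = X^{r,|r|=n,\gamma} \otimes 1 + \sum_{l} X_{l,n} \otimes (X^{r,|r|=n,\gamma}_l),
$$
and then combine it with the coproduct $\Delta(X_{l,n}) = \sum_j X_{l+j,n} \otimes (X_{l,n})_j$ from Corollary \ref{corl:cop-Xln}. Matching the left cofactors $X_{l,n}$ against the grafting of the right cofactors reassembles the two terms $B_+^\gamma(X_{l,n}) \otimes \I$ and $(\id \otimes B_+^\gamma)\Delta(X_{l,n})$, which is the assertion.

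The main obstacle is the enhanced coproduct formula, and this is exactly where primitivity of $\gamma$ is used. One must show that whenever a proper subgraph $\delta \subsetneq \Gamma$ is extracted from a graph $\Gamma$ of primitive residue $\gamma$, the cograph $\Gamma/\delta$ still has primitive residue $\gamma$; equivalently, that no 1PI subgraph can straddle or partially consume the skeleton $\gamma$. Since $\gamma$ is primitive it admits no proper nontrivial 1PI subgraph, so the shadow of any such $\delta$ on $\gamma$ is either trivial or all of $\gamma$; the latter forces $\delta=\Gamma$, while in the former case $\delta$ lies entirely inside the inserted decorations and contracting it leaves the skeleton untouched, keeping the quotient in the sum indexed by $\gamma$. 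Carrying the symmetry factors through this contraction finishes the verification.
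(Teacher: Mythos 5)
Your proposal is correct and follows essentially the same route as the paper, whose proof of this proposition is exactly the per-graph step isolated inside the proof of Theorem \ref{thm:gauge}(2): identify $X^{r,|r|=n,\gamma}=B_+^\gamma(X_{l,n})$, enhance Corollary \ref{corl:cop-green} to the partial sum over graphs of primitive residue $\gamma$, and combine with Corollary \ref{corl:cop-Xln}. Your final paragraph merely spells out (correctly) why primitivity of $\gamma$ makes the enhanced coproduct formula hold, a detail the paper leaves implicit.
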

\section{An Example}
Let us study $\phi^4$ theory in four dimensions and work out the 1PI vertex function to two loops.
We have
\beas X^{\times,|\times|=4} & = & \One+g\frac{1}{2}\left[\sol+\tol+\uol\right]+g^2\left[\frac{1}{2}\left(\flptao+\flptat\right)+\frac{1}{4}\flptb\right]\\ & & +\mathrm{other \;orientations}+\mathcal{O}(g^3).\eeas
Let us reproduce this expansion from either the Hochschild cohomology of the renormalization Hopf algebra
or from the Hochschild cohomology of the core Hopf algebra.
In the renormalization Hopf algebra we have
\be X^{\times,|\times|=4}=\One+B_+^{1,\times}([X^{\times,|\times|=4}]Q)\ee
(there is no primitive element at two-loops)
with
\be B_+^{1,\times}=\frac{1}{2}\left(B_+^{\sol}+B_+^{\tol}+B_+^{\uol}\right)\ee
and
\be Q=\frac{X^{\times,|\times|=4}}{\left[X^{-,|-|=2}\right]^2}.\ee
In the core Hopf algebra
\be X^{\times,|\times|=4}=\One+B_+^{1,\times}(X^{\times,|\times|=4}Q)\ee
but now
\be Q_1:=\frac{X^{\times,|\times|=4}}{\left[X^{-,|-|=2}\right]^2}
=\left[ \frac{\overline{X}^{\sixpt,\left|\!\sixpt\!\right|=6}}{\left[X^{-,|-|=2}\right]^3}\right]^{\frac{1}{2}}=:Q_2,\ee
where the core co-ideal implies $Q_1=Q=Q_2$.
Furthermore
\be B_+^{1,\times}=\frac{1}{2}\left(B_+^{\sol}+B_+^{\tol}+B_+^{\uol}+B_+^{\tadol}\right).\ee

The coproduct is different in the renormalization Hopf algebra and in the core Hopf algebra.
In the renormalization Hopf algebra we find
\be \Delta^\prime\left[\frac{1}{2}\left(\flptao+\flptat\right)+\frac{1}{4}\flptb\right]=\frac{1}{2}\left[\sol+\tol+\uol\right]\otimes\left[\sol+\tol+\uol\right],\ee
taking the other orientations into account.

In the core Hopf algebra we find
\be \Delta_c^\prime \left[\frac{1}{2}\left(\flptao+\flptat\right)+\frac{1}{4}\flptb\right] =\frac{1}{2}\left[\sol+\tol+\uol\right]\otimes\left[\sol+\tol+\uol\right]+2\sixpto\otimes\tadol.\ee

Let us now work out the Hochschild one-cocycles. Expanding the arguments of the $B_+$ cocycles to one loop and keeping the vertex functions to one loop, we have
\be \frac{1}{2}B_+^{\sol}(2X_1^{\times,|\times=4|})\ee
in the renormalization Hopf algebra and
\be \frac{1}{2}(B_+^{\sol}(2X_1^{\times,|\times|=4})+B_+^{\tadol}(X_1^{\sixpt,\left|\!\sixpt\!\right|=6}))\ee
in the core Hopf algebra.
Let us see in particular how the terms
\be \frac{1}{2}\left(\flptao+\flptat\right)+\frac{1}{4}\flptb\label{flpt}\ee are obtained in either case.
We have in the renormalization Hopf algebra
\beas
{\textbf{bij}}(\sol,\sol,\flptb)& = & 1,\\
{\textbf{bij}}(\sol,\sol,\flptao)& = & 2,\\
{\textbf{bij}}(\sol,\sol,\flptat)& = & 2,\\
\left|\sol\right|_\vee & = & 3,\\
\left(\sol|\sol\right) & = & 2,\\
\mathrm{maxf}\left(\flptb\right) & = & 2,\\
\mathrm{maxf}\left(\flptao\right) & = & 1,\\
\mathrm{maxf}\left(\flptat\right) & = & 1.
\eeas

The main difference is in the number of maximal forests $\mathrm{maxf}$.
In the renormalization Hopf algebra, the first two terms in (\ref{flpt}) have just one maximal forest, while the third has two, as indicated.
In the core Hopf algebra the first two terms have three maximal forests each, while the third term has two as before.

We hence find, counting maximal forests, bijections, insertion places and orientations, as above,
\be \frac{1}{2}B_+^{\sol}(2\times \frac{1}{2}\left[\sol+\tol+\uol\right])=\frac{1}{2}\left(\flptao+\flptat\right)+\frac{1}{4}\flptb,\ee
while in the core Hopf algebra, the situation is a bit more interesting:
\be \frac{1}{2}B_+^{\sol}(2\times \frac{1}{2}\left[\sol+\tol+\uol\right])=\frac{1}{6}\left(\flptao+\flptat\right)+\frac{1}{4}\flptb,\ee
\be \frac{1}{2}B_+^{\tadol}(\sixpto)=\frac{1}{3}\left(\flptao+\flptat\right),\ee
which add up to the desired result, also confirming the cocycle property of the $B_+$ maps.
\section{The core ideals, unitarity and AdS/CFT}
We conclude this paper with a short study of the relation between the core co-ideal $I$ introduced above and recursive relations between tree-level amplitudes, as suggested in \cite{Kre09}.

In the above, we identified a core co-ideal conveniently summarized by the relations
\bea \frac{X^{r,|r|=n+1}}{X^{r,|r|=n}} =  \frac{X^{r,|r|=n}}{X^{r,|r|=n-1}} & & \Leftrightarrow\nonumber\\
\Leftrightarrow X^{r,|r|=n} =  X^{r,|r|=j}\frac{1}{X^{r,|r|=2}}X^{r,|r|=k} & &
,\forall n>2,j>2,k>2,j+k=n+2.\label{adscftideal}\eea

Note that this severely restricts possible relations between tree-level diagrams.
Consider for example at zero loops the tree graphs
\be T_0(c_1):=P^{\Phi(r),|r|=4}\Phi\left(c_1\fpt+\tpt\right).\ee
The projector $P^{\Phi(r),|r|=4}$ maps the evaluation of the tree level diagrams, for given fixed external momenta,  to complex numbers, and vanishes on none of the four terms. There must hence exist a number $c_1$ such that $T_0(c_1)=0$.
Let $T(c_1)=\sum_{j\geq 0}T_j(c_1)$ be the expansion obtained by a loop expansion of any internal vertex or propagator in
$T_0(c_1)$.

We can now determine $c_1$ from squaring the amplitude $T_0(c_1)$.
This delivers (for the $s$-channel)
\be \adscft.\ee
This is in accordance with the co-ideal if and only if $c_1=-1$. Indeed, restricting to the one-loop case
we have in the co-ideal \be 2 X_1^{r,|r|=3}+X_1^{r,|r|=2}=X_1^{r,|r|=4},\ee
which is consistent with the expansion of $T(c_1)$ to one-loop only at $c_1=-1$.
This is clearly seen in the figure. Even the 1PI graphs in the $s$-channel expansion of \be \fpt,\ee come with different powers of $c_1$.
On the other hand, at $c_1=-1$, summing over $s,t,u$ channels and taking symmetry factors into account, we find
\be X_1^{r,|r|=4}=\left[X^{r,|r|=3}\frac{1}{X^{r,|r|=2}}X^{r,|r|=3}\right].\ee

While in the renormalization Hopf algebra one uses only the co-ideal for $|r|=4$, in the core Hopf algebra we have a generic
co-ideal structure such that the celebrated BCFW recursion is required for consistency of the Feynman rules with that co-ideal structure beyond tree-level.

\begin{prop}
The relations
\be \bcfw, \label{bcfw}\ee
are in accordance with the  co-ideal above for suitably chosen elements in the group $\mathrm{Spec}(H/I)$, defined by evaluating external particles on the mass-shell in accordance with the BCFW rules \cite{BCF04,BCFW05,B-V07}.
\end{prop}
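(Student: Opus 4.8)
The plan is to exploit that a point of $\mathrm{Spec}(H/I)$ is by definition a character $\phi\colon H/I\to\C$, and that the defining relations $Q_\ext{m}=Q_\ext{n}$ of the quotient translate, via the recursive form (\ref{STcore}), into the factorization identity (\ref{adscftideal}) holding as an identity in $H/I$ itself. Applying any character $\phi$ and using that it is an algebra homomorphism, one obtains the numerical factorization
$$
\phi\!\left(X^{r,|r|=n}\right)=\phi\!\left(X^{r,|r|=j}\right)\,\phi\!\left(X^{r,|r|=2}\right)^{-1}\phi\!\left(X^{r,|r|=k}\right),\qquad j+k=n+2.
$$
The task is then to show that the assignment prescribed by the BCFW rules --- external legs on the mass shell, together with the complex momentum shift of two chosen legs --- defines such a character, and that under it the right-hand side above is precisely the BCFW recursion (\ref{bcfw}).

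First I would make the BCFW evaluation into a candidate element of $\mathrm{Spec}(H/I)$. For fixed on-shell external data one shifts two momenta by $z$ along a direction $q$ keeping all legs on-shell, so that the shifted tree amplitude $A_n(z)$ is a rational function of $z$ with simple poles exactly where an internal line goes on-shell. Under the no-pole-at-infinity hypothesis built into the BCFW rules, Cauchy's theorem gives $A_n=\sum_I A_L(z_I)\,(P_I^2)^{-1}A_R(z_I)$, the sum running over factorization channels. I would then record that this assignment is multiplicative on disjoint products of sub-amplitudes and hence a well-defined character once restricted to $H/I$; compatibility with the coideal is exactly the content to be checked.

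Next I would match the two factorizations term by term. In each channel the intermediate propagator $(P_I^2)^{-1}$ is supplied by the factor $\phi(X^{r,|r|=2})^{-1}$ in (\ref{adscftideal}), while $A_L$ and $A_R$ are the evaluations of $X^{r,|r|=j}$ and $X^{r,|r|=k}$ on the BCFW-shifted on-shell kinematics at $z=z_I$. The relative normalization is fixed by the tree-level analysis above: the same computation that forces $c_1=-1$ (yielding $2X_1^{r,|r|=3}+X_1^{r,|r|=2}=X_1^{r,|r|=4}$, equivalently $X_1^{r,|r|=4}=X^{r,|r|=3}(X^{r,|r|=2})^{-1}X^{r,|r|=3}$) determines the coefficients with which the channels are summed, so that the channel sum reproduces the single algebraic product.

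The main obstacle is reconciling the BCFW sum over channels with the fact that (\ref{adscftideal}) holds as an identity for \emph{each} admissible splitting $j+k=n+2$ separately. The coideal relation lives at the level of formal Green's functions in $H/I$, whereas BCFW is a statement about their on-shell values at the specific kinematic points $z_I$; the crux is to show that the BCFW-shifted on-shell evaluation genuinely respects the coideal for all splittings simultaneously, i.e.\ that it descends from $H$ to a point of $\mathrm{Spec}(H/I)$. Establishing this requires that the no-pole-at-infinity condition --- the analytic input validating the recursion --- is precisely what guarantees that no spurious contribution spoils the multiplicative factorization, and that the intermediate states summed at each pole are exactly the on-shell two-point data encoded by $X^{r,|r|=2}$. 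I expect this identification of the analytic BCFW input with the algebraic coideal constraint to be the heart of the argument.
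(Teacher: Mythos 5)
Your proposal follows essentially the same route as the paper: invoke (\ref{adscftideal}), identify the loop expansion of the left-hand side of (\ref{bcfw}) with $X^{r,|r|=n}$ and that of the right-hand side with $X^{r,|r|=j}\,(X^{r,|r|=2})^{-1}\,X^{r,|r|=k}$, and let the BCFW on-shell and helicity assignments supply the evaluating element of $\mathrm{Spec}(H/I)$. The only divergence is that the step you single out as ``the heart of the argument'' --- verifying that the BCFW evaluation genuinely descends to a character on $H/I$ --- is not attempted in the paper either, which explicitly claims only \emph{accordance} with the co-ideal (taking the element of $\mathrm{Spec}(H/I)$ as given by the BCFW rules) and expressly disclaims deriving the recursion from it.
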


\begin{proof}
We use (\ref{adscftideal}). The loop expansion of the vertices and the internal propagator in each term of (\ref{bcfw}) on the lhs is an expansion of $X^{r,|r|=n}$, the expansion of the rhs is an expansion of $X^{r,|r|=n-j+1} \frac{1}{ X^{r,|r|=2} } X^{r,|r|=j}$.
The choices of helicities at all external propagators, and the internal propagator, and the mass-shell conditions for all external legs,
specify then an element in the group $\mathrm{Spec}(H/I)$ by which we evaluate those graphs.
\end{proof}
Note that this does not pretend we can derive the recursion (\ref{bcfw}) from our co-ideal. It merely says that those recusions are in accordance with the most natural co-ideal in the core Hopf algebra.
\section{Outlook}
In this paper we have given the core Hopf algebra as a mathematically robust framework to investigate many properties of  Feynman rules
in $\mathrm{Spec}(H/I)$ which emerge in the recent literature. We hope that the first steps reported here open the way to a much better understanding of recursive relations between multi-leg and multi-loop amplitudes.
\section*{Acknowledgments} 
WvS thanks Thomas Quella for discussions on MHV recursions and Hopf algebras. DK thanks Karen Yeats for numerous discussions regarding the core Hopf algebra. WvS thanks IHES for hospitality.

\newcommand{\noopsort}[1]{}

\end{document}